\newcommand{\DENSE}{%
\setlength{\labelwidth}{12pt}%
\setlength{\labelsep}{2pt}%
\setlength{\leftmargin}{\labelwidth}%
\addtolength{\leftmargin}{\labelsep}%
\setlength{\parsep}{0pt}%
\setlength{\itemsep}{0pt}%
\setlength{\topsep}{0pt}
}
\newenvironment{ditemize}{\begin{list}%
{$-$~}{\DENSE}}{\end{list}}
\newtheorem{theorem}{Theorem}
\newtheorem{lemma}{Lemma}
\newtheorem{Corollary}{Corollary}
\newcommand{\cercle}[4]{
\node[circle,inner sep=0,minimum size={2*#2}](a) at (#1) {};
\draw[black,thick] (a.#3) arc (#3:{#3+#4}:#2);
}
\begin{document}
\begin{frontmatter}

\title{On Pairwise Compatibility of Some Graph (Super)Classes}

\author[comp]{T. Calamoneri\fnref{fn1}}
\ead{calamoneri@di.uniroma1.it}

\author[comp,INRIA]{M. Gastaldello}
\ead{mattia.gastaldello@uniroma1.it}

\author[INRIA]{B. Sinaimeri}
\ead{blerina.sinaimeri@inria.fr}

\address[comp]{Department of Computer Science, ``Sapienza'' University of Rome, Italy}
\address[INRIA]{INRIA and Universit\'e de Lyon, Universit\'e Lyon 1, LBBE, CNRS UMR558, France}

\fntext[fn1]{Corresponding Author - Partially supported by the Italian Ministry of Education,
University, and Research (MIUR) under PRIN 2012C4E3KT national
research project ``AMANDA'' -  Algorithmics for MAssive and Networked
DAta").}

%

%
%

\begin{abstract}
A graph $G=(V,E)$ is a {\em pairwise compatibility graph} (PCG) if there exists an edge-weighted tree $T$ and two non-negative real numbers $d_{min}$ and $d_{max}$ such that each leaf $u$ of $T$ is a node of $V$ and there is an edge $(u,v) \in E$ if and only if $d_{min} \leq d_T (u, v) \leq d_{max}$ where $d_T (u, v)$ is the sum of weights of the edges on the unique path from $u$ to $v$ in $T$. 
The main issue on these graphs consists in characterizing them.

In this note we prove the inclusion in the PCG class of threshold tolerance graphs and the non-inclusion of a number of intersection graphs, such as disk and grid intersection graphs, circular arc and tolerance graphs. 
The non-inclusion of some superclasses (trapezoid, permutation and rectangle intersection graphs) follows.
\end{abstract}

\begin{keyword}
combinatorial problems, intersection graphs, pairwise compatibility graphs, threshold tolerance graphs.
\end{keyword}

\end{frontmatter}

\begin{doublespace}

\section{Introduction}\label{sec:intro}

A graph $G=(V,E)$ is a {\em pairwise compatibility graph (PCG)}  if there exists an edge-weighted tree $T$ and two non-negative real numbers $d_{min}$ and $d_{max}$ such that each leaf $u$ of $T$ is a node of $V$ and there is an edge $(u,v) \in E$ if and only if $d_{min} \leq d_T (u, v) \leq d_{max}$ where $d_T (u, v)$ is the sum of weights of the edges on the unique path from $u$ to $v$ in $T$. 

This graph class arose in the context of the sampling problem on phylogenetic trees  \cite{Kal03}, subject to some biologically-motivated constraints, in order to test the reconstruction algorithms on the smaller subtrees induced by the sample. The constraints on the sample attempt to assure that the behavior of the algorithm will not be biased by the fact it is applied on the sample instead on the whole tree. 

Much attention has been dedicated to PCGs in the literature. However, many problems remain open and we are still far from a complete characterization of the PCG class.
Any progress towards the solution of the latter problem would be interesting not only from a graph theory perspective but also because it could help in the design of better sampling algorithms for phylogenetic trees.

In this note, we start from the knowledge that certain graph classes with a strong structure are known to be in PCG, and we wonder whether some of their superclasses remain in PCG.
Namely, it is well known that interval graphs are in PCG \cite{BH08}; we consider more general intersection graphs, as disk graphs, grid intersection graphs and circular arc graphs, and we prove that these three superclasses of interval graphs are not in PCG.
Permutation and trapezoid graphs, superclasses of circular arc graphs, are trivially not in PCG, too.

Moreover, we consider threshold graphs, an important subclass of interval graphs, that can be defined in terms of weights and thresholds; it is known that threshold graphs are in PCG  \cite{CPS11}; we analyze some of its superclasses that can be defined in terms of weights and thresholds, too, such as threshold tolerance and tolerance graphs; we prove here that threshold tolerance graphs still are in PCG while tolerance (i.e. parallelepiped) graphs are not.

In this way, we heavily delimitate the width of PCGs moving one step ahead toward the full comprehension of this interesting class of graphs.

\section{Superclasses of Interval Graphs}

A graph is an {\em interval} graph if it has an intersection model consisting of intervals on a straight line. 
Interval graphs are in PCG \cite{BH08}, and the witness tree is a  {\em caterpillar},  i.e. a tree in which all the nodes are within distance 1 of a central path, called spine.

In this section we consider some superclasses of interval graphs, defined as intersection graphs, and we prove that all these classes are not included in PCG, as there is at least one graph belonging to them that is not in PCG. 

A {\em disk} graph is the intersection graph of disks in the plane. 
A graph is {\em grid intersection} if it is the intersection graph of horizontal and vertical line segments in the plane.  
A {\em circular arc} graph is the intersection graph of arcs of a circle. 

It is known that graph $H$ depicted in Fig. \ref{fig.graph}.a is not in PCG \cite{DMR13}. 
On the other hand, Figures \ref{fig.graph}.b, \ref{fig.graph}.c and  \ref{fig.graph2}.a show a representation of graph $H$ as a disk graph, as a grid intersection graph  and as a circular arc graph, respectively.
This is enough to ensure the correctness of the following theorem:

\begin{theorem}
Disk graphs, grid intersection graphs and circular arc graphs are not in PCG.
\end{theorem}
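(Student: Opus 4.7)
The plan is to use a single witness graph $H$ (the one depicted in Fig.~\ref{fig.graph}.a) and show that it belongs to each of the three classes while invoking the already-established fact from \cite{DMR13} that $H \notin \mathrm{PCG}$. Since PCG is closed under taking the class itself (i.e.\ to show a class $\mathcal{C}$ is not contained in PCG it suffices to exhibit one member of $\mathcal{C}$ that is not a PCG), producing such representations for $H$ in each of the three intersection-graph classes immediately settles the theorem.

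Concretely, I would proceed in three independent steps, one per class. First, for disk graphs, I would place disks in the plane realizing $H$ as their intersection graph; this is essentially a geometric drawing exercise where adjacent vertices correspond to overlapping disks and non-adjacent vertices to disjoint disks, and the picture in Fig.~\ref{fig.graph}.b is the explicit certificate. Second, for grid intersection graphs, I would assign each vertex of $H$ either a horizontal or a vertical segment so that two segments cross if and only if the corresponding vertices are adjacent, as shown in Fig.~\ref{fig.graph}.c. Third, for circular arc graphs, I would place arcs on a circle realizing the adjacencies of $H$, as shown in Fig.~\ref{fig.graph2}.a. In each case the verification is a finite and routine check: enumerate the pairs of vertices and confirm that intersection in the representation matches edges of $H$.

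The only real ingredient beyond the three drawings is the cited non-membership result from \cite{DMR13}; given that, the theorem follows immediately by combining the fact that $H$ is not a PCG with the three explicit representations. The main obstacle is not logical but geometric: finding representations for $H$ in three different intersection models. Since these representations are already exhibited in the figures, no calculation is needed beyond checking the adjacency list of $H$ against each picture. I would close with a single sentence noting that, as an immediate consequence, any superclass of one of these three classes (for example trapezoid, permutation or rectangle intersection graphs) also fails to be contained in PCG.
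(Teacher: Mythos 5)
Your proposal is correct and coincides with the paper's own argument: the paper proves the theorem exactly by taking the graph $H$ of Fig.~1.a, which is not a PCG by the result of Durocher, Mondal and Rahman, and exhibiting its representations as a disk graph, a grid intersection graph and a circular arc graph in Figs.~1.b, 1.c and 2.a. Your closing remark about superclasses likewise matches the paper's two corollaries on rectangle (square) intersection, trapezoid and permutation graphs.
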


\begin{figure}[t]
\begin{center}
\begin{tabular}{c  c  c}
\begin{picture}(100,80)(0,0)
\put(35,0){\circle*{6}}
\put(35,20){\circle*{6}}
\put(0,35){\circle*{6}}
\put(20,35){\circle*{6}}
\put(50,35){\circle*{6}}
\put(70,35){\circle*{6}}
\put(35,50){\circle*{6}}
\put(35,70){\circle*{6}}

\put(38,-5){$8$}
\put(33,24){$7$}
\put(-8,30){$3$}
\put(23,30){$4$}
\put(42,30){$5$}
\put(73,30){$6$}
\put(33,40){$2$}
\put(38,67){$1$}

\put(0,35){\line(1,0){20}}
\put(50,35){\line(1,0){20}}
\put(35,0){\line(0,1){20}}
\put(35,50){\line(0,1){20}}

\qbezier(0,35)(35,20)(35,20)
\qbezier(0,35)(35,0)(35,0)
\qbezier(0,35)(35,50)(35,50)
\qbezier(0,35)(35,70)(35,70)
\qbezier(20,35)(35,20)(35,20)
\qbezier(20,35)(35,0)(35,0)
\qbezier(20,35)(35,50)(35,50)
\qbezier(20,35)(35,70)(35,70)

\qbezier(50,35)(35,20)(35,20)
\qbezier(50,35)(35,0)(35,0)
\qbezier(50,35)(35,50)(35,50)
\qbezier(50,35)(35,70)(35,70)
\qbezier(70,35)(35,20)(35,20)
\qbezier(70,35)(35,0)(35,0)
\qbezier(70,35)(35,50)(35,50)
\qbezier(70,35)(35,70)(35,70)
\end{picture}
&
\begin{picture}(100,80)(0,0)
\put(25,35){\circle{30}}
\put(20,35){\circle{40}}
\put(45,15){\circle{30}}
\put(45,10){\circle{40}}
\put(45,55){\circle{30}}
\put(45,60){\circle{40}}
\put(65,35){\circle{30}}
\put(70,35){\circle{40}}

\put(42,-8){$8$}
\put(42,10){$7$}
\put(3,30){$3$}
\put(23,30){$4$}
\put(66,30){$5$}
\put(83,30){$6$}
\put(42,52){$2$}
\put(42,72){$1$}

\end{picture}
&
\begin{picture}(80,80)(0,0)

\linethickness{0.5mm}
\put(25,0){\line(0,1){80}}
\put(55,0){\line(0,1){80}}
\put(0,25){\line(1,0){80}}
\put(0,55){\line(1,0){80}}

\put(24,10){\line(0,1){60}}
\put(56,10){\line(0,1){60}}
\put(10,24){\line(1,0){60}}
\put(10,56){\line(1,0){60}}

\put(73,17){$8$}
\put(64,15){$7$}
\put(19,2){$3$}
\put(17,12){$4$}
\put(58,63){$5$}
\put(56,73){$6$}
\put(12,58){$2$}
\put(3,57){$1$}

\end{picture}\\

&\\
a&b&c
\end{tabular}
\end{center}
\caption{a. The graph $H$ that is not a PCG; b. Representation of $H$ as a disk graph; c. Representation of $H$ as a grid intersection graph.} 
\label{fig.graph}
\end{figure}
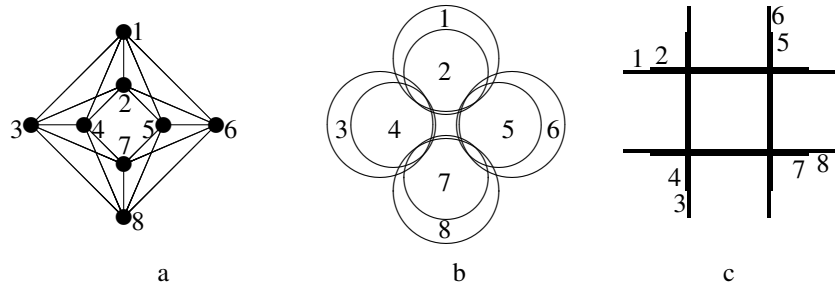

A graph is {\em rectangle (square) intersection} if it has an intersection model consisting of axis-parallel rectangular (squared) boxes in the plane. 

Rectangle (square) intersection graphs are a superclass of grid intersection graphs, and so it holds:

\begin{Corollary}
Rectangle (square) intersection graphs are not in PCG.
\end{Corollary}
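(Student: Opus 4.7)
The plan is an immediate reduction to the preceding theorem. The graph $H$ of Fig.~\ref{fig.graph}.a is known not to be in PCG and is already drawn in Fig.~\ref{fig.graph}.c as a grid intersection graph, so all I need to do is exhibit $H$ both as a rectangle intersection graph and as a square intersection graph; the non-inclusion in PCG then transfers to both superclasses.

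First, I would handle the rectangle case by a routine thickening: starting from the grid model of $H$ in Fig.~\ref{fig.graph}.c, replace each horizontal (resp. vertical) segment by an axis-parallel rectangle of the same length and width $\varepsilon$, with $\varepsilon$ chosen strictly smaller than the minimum distance between two parallel segments in that model. Every perpendicular crossing is preserved, no new parallel crossing is introduced, and hence the intersection graph is unchanged; $H$ is thus a rectangle intersection graph, and since it is not in PCG the whole class cannot be contained in PCG.

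Next, for the square case, the thickening trick fails because squares have a fixed aspect ratio, so I would exhibit $H$ directly by an arrangement of eight axis-parallel unit squares in a ``plus'' pattern, placing each of the four vertex-pairs of $H$ on one of the four arms: centres along the $x$-axis at $(\pm 0.9, 0)$ and $(\pm(0.9+\varepsilon), 0)$, and along the $y$-axis at $(0,\pm 0.9)$ and $(0,\pm(0.9+\varepsilon))$, for some small $\varepsilon>0$. Two squares of a common pair are at distance $\varepsilon$ and overlap (giving the four pair-edges); two squares on the same arm but in different pairs are at distance $>1$ and are disjoint; any horizontal-arm square meets any vertical-arm square in a small region near the origin, realising the $K_{4,4}$ between arms. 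The main (and only) obstacle is this coordinate verification: there are just $\binom{8}{2}=28$ pairs to check, and a short routine calculation confirms that the resulting intersection graph is exactly $H$, completing the argument.
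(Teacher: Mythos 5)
Your proposal is correct, and at the top level it is the only possible strategy, the same as the paper's: reduce everything to exhibiting the known non-PCG graph $H$ as a member of the class in question. The details, however, differ instructively. The paper dispatches the corollary in one line, by asserting that rectangle (square) intersection graphs are a superclass of grid intersection graphs, and separately records an explicit square model of $H$ (Figure \ref{fig.graph2}.b: four small squares each nested in a larger one, arranged around a centre) which it declares unnecessary for the corollary and keeps only for the later proof of Theorem 2. You instead give two self-contained constructions: a thickening of the segment model for rectangles, and a fresh plus-shaped arrangement of eight unit squares. I checked your coordinates: for $\varepsilon < 0.1$ the unit squares centred at $(\pm 0.9,0)$, $(\pm(0.9+\varepsilon),0)$, $(0,\pm 0.9)$, $(0,\pm(0.9+\varepsilon))$ intersect in exactly $K_{4,4}$ between the horizontal-axis and vertical-axis squares plus the four pair edges, which is $H$; your phrase ``same arm but in different pairs'' should read ``opposite arms of the same axis'' (each arm hosts one pair), but the geometry is right. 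Notably, your observation that the thickening trick fails for squares is \emph{more} careful than the paper's blanket superclass claim, since a segment degenerates to a thin rectangle but not to a square; this is presumably why the paper supplies Figure \ref{fig.graph2}.b at all, and since squares are rectangles, your square construction alone already proves both statements. One caveat on your rectangle step: $H$ contains triangles, hence is not bipartite, so in the grid model of Figure \ref{fig.graph}.c the two segments of each pair must be understood as collinear overlapping segments, drawn slightly offset only for legibility; accordingly your condition ``$\varepsilon$ strictly smaller than the minimum distance between two parallel segments'' must be read as the minimum \emph{positive} such distance, together with the (immediate) remark that overlapping collinear segments remain intersecting after thickening. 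This does not affect the validity of your argument, precisely because the square construction suffices on its own.
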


Figure \ref{fig.graph2}.b shows a representation of $H$ as square intersection graph: although not necessary to prove that rectangle (square) intersection graphs are not in PCG, it will be useful for proving Theorem 2.
Moreover, we remark that rectangle intersection graphs are a superclass of Dilworth 2 graphs, that are known to be in PCG \cite{CP13}.

A {\em trapezoid} graph is the intersection graph of trapezoids between two parallel lines.
A {\em permutation} graph is the intersection graph of  straight lines between two parallels. 
The following chain of inclusions holds:

interval graphs $\subseteq$ circular arc graphs $\subseteq$ permutation graphs $\subseteq$ trapezoid graphs

\noindent
leading to the following statement:
\begin{Corollary}
Trapezoid and permutation graphs are not in PCG.
\end{Corollary}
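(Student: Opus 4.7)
The plan is to reuse the very same graph $H$ that plays the starring role in Theorem 1. Since $H$ has already been shown not to be a PCG, it suffices to certify that $H$ lies in each of the two classes mentioned in the corollary; the conclusion then follows in one line.

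The obvious route is to appeal to the inclusion chain displayed just above the statement. Because $H$ has already been exhibited as a circular arc graph in Figure~2.a, and the chain reads circular arc $\subseteq$ permutation $\subseteq$ trapezoid, the graph $H$ automatically lands in both the permutation class and the trapezoid class. Combined with $H \notin \mathrm{PCG}$, this gives the corollary immediately, and no new figure or construction is strictly required.

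The only potentially delicate point is the inclusion chain itself. The containment of permutation graphs in trapezoid graphs is standard, since a line segment between two parallel lines is a degenerate trapezoid and is easily realized in the trapezoid model. The containment of circular arc graphs in permutation graphs is less routine in the general literature, so if one wants a self-contained argument I would sidestep the chain and produce a direct witness: since $H$ has only eight vertices and a highly symmetric shape, namely $K_{4,4}$ augmented by a perfect matching within each part of the bipartition, one can exhibit a permutation on $\{1,\dots,8\}$ whose inversion graph is isomorphic to $H$, and this permutation immediately yields a permutation representation (hence also a trapezoid representation) of $H$. Either way, the main obstacle is purely a representability check for one small, explicit graph, and there is no need to revisit the PCG side of the argument at all.
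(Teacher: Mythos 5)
Your primary route is precisely the paper's proof: the paper derives this corollary in one line from the displayed inclusion chain together with the circular arc representation of $H$ in Fig.~\ref{fig.graph2}.a. However, your instinct to distrust the middle link is justified, and the situation is worse than ``less routine'': the inclusion circular arc $\subseteq$ permutation is simply \emph{false}. Permutation graphs are comparability graphs and hence perfect, while $C_5$ is a circular arc graph that is not perfect; so circular arc graphs cannot all be permutation graphs (indeed even interval $\subseteq$ permutation fails). Consequently the chain as displayed in the paper does not support the corollary, and your fallback is not an optional self-containedness measure but the only sound argument of the two routes you propose.

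Fortunately, the fallback goes through exactly as you sketch. Your structural description of $H$ is accurate: it is $K_{4,4}$ on parts $\{3,4,5,6\}$ and $\{1,2,7,8\}$ together with the matching edges $\{3,4\},\{5,6\},\{1,2\},\{7,8\}$. Its complement is the disjoint union of two $4$-cycles (one inside each part, no edges across); since $C_4$ is a permutation graph and permutation graphs are closed under disjoint union and complementation, $H$ is a permutation graph. Concretely, $H$ is the inversion graph of the permutation $(6,5,8,7,2,1,4,3)$: the inverted pairs are exactly the two matched pairs inside each block $\{1,2,3,4\}$ and $\{5,6,7,8\}$ plus all sixteen cross pairs. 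Your remaining step, permutation $\subseteq$ trapezoid, is indeed standard (a segment between the two parallel lines is a degenerate trapezoid), so $H$ witnesses non-membership in PCG for both classes. In short: your proposal, once the direct witness is exhibited, is correct and in fact \emph{repairs} the paper's argument rather than merely reproducing it.
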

We point out that split permutation graphs, a proper subclass of permutation graphs, is in PCG \cite{CP13}.

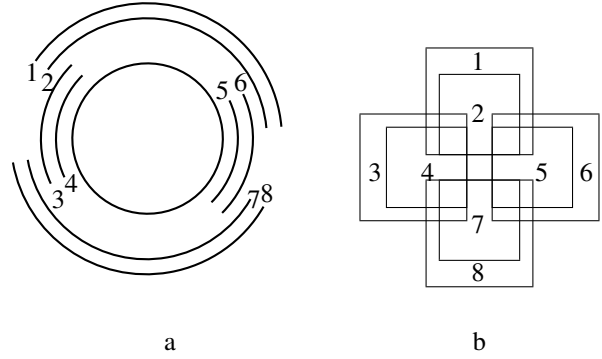
\begin{figure}[t]
\begin{center}
\begin{tabular}{c  c  c}

\begin{picture}(120,60)(0,0)

\begin{tikzpicture}
\coordinate (center) at (0,0);

\cercle{center}{1.0cm}{25}{360}

\cercle{center}{1.2cm}{25}{-70}
\cercle{center}{1.4cm}{25}{-70}
\cercle{center}{1.2cm}{205}{-70}
\cercle{center}{1.4cm}{205}{-70}
\cercle{center}{1.6cm}{145}{-140}
\cercle{center}{1.8cm}{145}{-140}
\cercle{center}{1.6cm}{190}{140}
\cercle{center}{1.8cm}{190}{140}

\put(43,-24){$8$}
\put(38,-26){$7$}
\put(33,18){$6$}
\put(26,15){$5$}
\put(-36,-26){$3$}
\put(-31,-20){$4$}
\put(-40,18){$2$}
\put(-46,22){$1$}
\end{tikzpicture}

\end{picture}
&

\begin{picture}(90,80)(0,0)
\put(10,30){\framebox(30,30)}
\put(0,25){\framebox(40,40)}
\put(30,10){\framebox(30,30)}
\put(25,0){\framebox(40,40)}
\put(30,50){\framebox(30,30)}
\put(25,50){\framebox(40,40)}
\put(50,30){\framebox(30,30)}
\put(50,25){\framebox(40,40)}

\put(42,2){$8$}
\put(42,20){$7$}
\put(3,40){$3$}
\put(23,40){$4$}
\put(66,40){$5$}
\put(83,40){$6$}
\put(42,62){$2$}
\put(42,82){$1$}

\end{picture}\\
&\\
a&b
\end{tabular}
\end{center}
\caption{a. Representation of $H$ as circular arc graph; b. Representation of $H$ as grid intersection graph.} 
\label{fig.graph2}
\end{figure}

A graph is a \emph{tolerance graph} \cite{GMT84} if to every node $v$ can be assigned a closed interval $I_v$ on the real line and a tolerance $t_v$ such that $x$ and 
$y$ are adjacent if and only if $|I_x \cap I_y| \geq \min\{t_x , t_y\}$, where $|I|$ is the length of the interval $I$. 
Tolerance graphs can be described through another intersection model, as they are equivalent to {\em parallelepiped} graphs, defined as the intersection graphs of {\em special parallelepiped} on two parallel lines.
Let $L$ and $M$ be two parallel lines in 3-dimensional Euclidean space: 
$L=\{(\cdot,0,0)\}$ and $M=\{(\cdot,1,0)\}$; 
a {\em special parallelepiped} on two parallel lines is either
\begin{itemize}
\item the convex hull of eight points:
$A = (a,1,0)$; $	A' = (a,1,z)$;
$B = (b,0,0)$; $B' = (b,0,z)$;
$C = (c,0,0)$; $C' = (c,0,z)$;
$D = (d,1,0)$; $D' = (d,1,z)$
for some $a,b,c,d,z \geq 0$ with $ADBC$ a parallelogram between L and M; or
\item the line segment between $(a,1,z)$ and $(b,0,z)$ for some $a,b,z \geq 0$.
\end{itemize}

Consider now the squares depicted in Figure \ref{fig.graph2}.b and consider them as the bases of cubes lying in the 3-dimensional semispace with not negative $z$ values. 
The intersection graph of these cubes is $H$, so showing that it is a tolerance graph, and proving the following theorem:

\begin{theorem} 
Tolerance graphs are not in PCG.
\end{theorem}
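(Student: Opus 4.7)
The plan is to exhibit the graph $H$ of Figure~\ref{fig.graph}.a as a tolerance graph. Combined with the fact already recorded in the note (from \cite{DMR13}) that $H$ is not a PCG, this immediately yields the theorem. Exploiting the equivalence between tolerance graphs and parallelepiped graphs recalled just before the statement, the task reduces to realizing $H$ as the intersection graph of a family of special parallelepipeds on the two parallel lines $L$ and $M$.

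The construction reuses the square intersection representation of $H$ displayed in Figure~\ref{fig.graph2}.b. To every vertex $v$ of $H$ I associate the three-dimensional cube $C_v$ having the corresponding square $S_v$ as base in the plane $z=0$ and extending upward into the half-space $z \ge 0$. Because every cube shares the floor plane $z=0$, two cubes $C_u, C_v$ intersect in $\mathbb{R}^3$ if and only if their bases $S_u, S_v$ intersect in the plane. Hence the intersection graph of the family $\{C_v\}$ coincides with the intersection graph of the squares of Figure~\ref{fig.graph2}.b, which is $H$. This takes care of the combinatorial side of the argument.

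The step requiring the most care is the verification that every $C_v$ is a special parallelepiped in the technical sense of the definition. This amounts to choosing parameters $a,b,c,d,z$ so that the eight vertices of $C_v$ play the roles of $A,A',B,B',C,C',D,D'$ and so that $ADBC$ is a parallelogram between $L$ and $M$. For an axis-aligned cube the base collapses into an axis-aligned rectangle, which is a special parallelogram, and the parameters $a,b,c,d,z$ can be read off directly from its vertices after a normalization that places the two reference lines at $y=0$ and $y=1$. Once this verification is carried out, the chain ``$H$ realized by cubes $\Rightarrow$ $H$ is a parallelepiped graph $\Rightarrow$ $H$ is a tolerance graph, while $H \notin \mathrm{PCG}$'' closes the proof.
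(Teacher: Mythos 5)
Your outline coincides with the paper's own proof: the paper, too, takes the squares of Figure~\ref{fig.graph2}.b as bases of cubes in the half-space $z \geq 0$, observes that the intersection graph of the cubes is $H$ (two cubes meet exactly when their bases do, since all cubes contain their base at $z=0$), invokes the equivalence with parallelepiped graphs, and concludes from $H \notin$ PCG \cite{DMR13}. However, the step you yourself single out as ``requiring the most care'' is precisely where your justification fails --- and the paper is silent on it. A special parallelepiped must have its base parallelogram spanning between the two fixed parallel lines $L$ and $M$, and these lines are \emph{common to the whole family}: every base must have one horizontal side on $L$ and the other on $M$. The eight squares of Figure~\ref{fig.graph2}.b do not lie between two common horizontal lines; in the figure's coordinates the top pair occupies the $y$-range $[50,90]$ while the bottom pair occupies $[0,40]$, which are disjoint, so no single normalization ``placing the two reference lines at $y=0$ and $y=1$'' can make all eight cubes special parallelepipeds, and you are not allowed to normalize each cube separately. (Note also that in the definition the extrusion height $z$ is a parameter independent of the base, whereas for your cubes it is tied to the side length --- a further sign that a cube is not generically an object of this model.) So ``the parameters $a,b,c,d,z$ can be read off after a normalization'' is not a verification; as stated, this step would fail.

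The theorem itself is safe, and the gap is easy to close by exhibiting a tolerance representation of $H$ directly, bypassing the parallelepiped model altogether. With the vertex labels of Figure~\ref{fig.graph}.a, take $I_3=[0,10]$, $I_4=[5,15]$, $I_5=[20,30]$, $I_6=[25,35]$, each with tolerance $1$, and $I_1=I_2=[8,26]$, $I_7=I_8=[9,27]$, each with tolerance $18$. Then $3$-$4$ and $5$-$6$ are edges (overlaps of length $5 \geq 1$) while $\{3,4\}$ and $\{5,6\}$ use disjoint intervals; $1$-$2$ and $7$-$8$ are edges (overlaps of length $18 \geq 18$) while $|I_1 \cap I_7| = |[9,26]| = 17 < 18$ kills all four pairs between $\{1,2\}$ and $\{7,8\}$; and every interval of $\{1,2,7,8\}$ overlaps every interval of $\{3,4,5,6\}$ in length at least $1 = \min$ of the tolerances, giving the complete bipartite part. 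Since every tolerance here is at most the length of its interval, this even shows $H$ is a \emph{bounded} tolerance graph, which is stronger than what the cube construction was aiming for.
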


\section{Superclasses of Threshold Graphs}

A graph $G=(V,E)$ is a \emph{threshold} graph  if there is a real number $t$  and for every node $v$ in $V$ there is a real weight $a_v$ such that: $\{v,w\}$ is an edge if and only if  $a_v + a_w \geq  t$ \cite{MRT88,MP95}. They are in PCG, as shown in \cite{CPS13}, and the witness tree is a star.

We consider a superclass of threshold graphs that can be also defined in terms of weights and thresholds: threshold tolerance graphs; we put them in relation with the PCG class proving that threshold tolerance graphs are in PCG.

A graph $G=(V,E)$ is a \emph{threshold tolerance} graph if it is possible to associate weights and tolerances with each node of $G$ so that two nodes are adjacent exactly when the sum of their weights exceeds either of their tolerances.
More formally, there are positive real-valued functions, weights $g$ and tolerances $t$ on $V$ such that $\{x, y\} \in E$  if and only if $g(x)+g(y) \geq  \min{(t(x), t(y))}$.
In the following we  indicate with $G=(V,E, g, t)$ a graph in this class.
Threshold tolerance graphs have been introduced in \cite{MRT86} as a generalization of threshold graphs, that can be obtained by defining the tolerance function as a constant \cite{MRT88}. 

Before proving that threshold tolerance graphs are in PCG, we need to demonstrate a preliminary lemma stating that, when we deal with threshold tolerance graphs, w.l.o.g. we can restrict ourselves to the case when $g$ and $t$ take only positive integer values.

\begin{lemma}
\label{lemma.int}
A graph $G=(V,E)$ is a threshold tolerance if and only if there exist two functions $g,t:V\rightarrow  \mathbb{N}^+$ such that $(V,E,g,t)$ is threshold tolerance.
\end{lemma}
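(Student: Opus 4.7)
The backward implication is immediate since $\mathbb{N}^+ \subset \mathbb{R}^+$, so I focus on the forward direction. Suppose $G=(V,E,g,t)$ is threshold tolerance with $g,t:V\to\mathbb{R}^+$. The plan is: first perturb to make all adjacency inequalities strict, then rationally approximate, and finally clear denominators.

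The first step is needed because the definition uses the non-strict inequality $g(x)+g(y)\geq\min(t(x),t(y))$, and approximation can flip an equality to the wrong side. Assume there is at least one non-edge (the all-edges case being trivial: take $g\equiv t\equiv 1$). Define
\[
\delta \;=\; \min\bigl\{\min(t(x),t(y)) - g(x) - g(y)\;:\; \{x,y\}\notin E\bigr\},
\]
which is strictly positive because $V$ is finite and each summand is positive by the definition of non-edge. Set $g'(v)=g(v)+\delta/3$ for every $v\in V$. Then for every edge $\{x,y\}$ we have $g'(x)+g'(y) > g(x)+g(y) \geq \min(t(x),t(y))$, and for every non-edge $g'(x)+g'(y) = g(x)+g(y)+2\delta/3 < \min(t(x),t(y))$. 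Thus $(V,E,g',t)$ is a threshold tolerance representation in which \emph{every} relevant inequality is strict.

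The second step is straightforward rational approximation. Let
\[
\alpha \;=\; \min\bigl\{\,\bigl|g'(x)+g'(y)-\min(t(x),t(y))\bigr|\;:\; x,y\in V\bigr\} \;>\; 0.
\]
Choose positive rationals $\tilde g(v)$ and $\tilde t(v)$ with $|\tilde g(v)-g'(v)|<\alpha/8$ and $|\tilde t(v)-t(v)|<\alpha/8$. A routine triangle-inequality argument shows that for any pair $x,y$, the quantity $\tilde g(x)+\tilde g(y)-\min(\tilde t(x),\tilde t(y))$ differs from $g'(x)+g'(y)-\min(t(x),t(y))$ by less than $\alpha/2$, so its sign is preserved; hence $(V,E,\tilde g,\tilde t)$ is still a threshold tolerance representation. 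Finally, letting $D\in\mathbb{N}^+$ be a common multiple of all denominators appearing in $\tilde g,\tilde t$, set $g^\star = D\tilde g$ and $t^\star = D\tilde t$. Multiplication by $D>0$ preserves each (strict) inequality, so $(V,E,g^\star,t^\star)$ is a threshold tolerance representation with $g^\star,t^\star:V\to\mathbb{N}^+$.

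The only subtlety is Step 1: one must be sure that the small additive bump applied to $g$ does not destroy any non-edge. Bounding the bump by $\delta/3$ handles this uniformly, and after that step the rest is mechanical. No other edge cases arise since $g'$ and $t$ remain strictly positive, and rational approximations can always be chosen to stay positive.
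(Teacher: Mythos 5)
Your proof is correct, and it follows the same skeleton as the paper's while being noticeably more rigorous at the one delicate point. Both arguments reduce to rational-valued $g,t$ and then clear denominators by multiplying through by a positive integer (the paper's least common multiple $m$, your $D$), noting that scaling preserves the defining inequality $g(x)+g(y)\geq\min(t(x),t(y))$. The difference lies in how the rational reduction is justified: the paper simply asserts it is ``not restrictive'' to assume $g,t:V\rightarrow\mathbb{Q}^+$ ``in view of the density of rational numbers among real numbers,'' which glosses over exactly the issue you isolate --- a pair realized with equality $g(x)+g(y)=\min(t(x),t(y))$ is an edge, but an arbitrarily small downward perturbation turns it into a non-edge, so density alone does not suffice. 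Your Step 1 (the uniform $\delta/3$ bump making every inequality strict, with the complete graph treated separately since $\delta$ would otherwise be a minimum over an empty set), together with the explicit $\alpha/8$ error control and the bound $\bigl|\min(\tilde t(x),\tilde t(y))-\min(t(x),t(y))\bigr|<\alpha/8$, closes this gap cleanly; the perturbation-then-approximation structure is a genuine addition over the published argument. One small slip to fix: in the definition of $\alpha$ you minimize over all $x,y\in V$, which includes $x=y$, and since $2g'(x)=t(x)$ is possible this could force $\alpha=0$, making the required rational approximations impossible unless $g'$ is already rational. Restrict the minimum to pairs of \emph{distinct} vertices --- the only pairs relevant to adjacency --- where Step 1 guarantees strict positivity. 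With that trivial repair your argument is complete, and it supplies a justification the paper's proof takes for granted.
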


\begin{proof}
Clearly if $f,g$ exist then by definition $G$ is a threshold tolerance graph. Suppose now $G$ is a threshold tolerance graph which weight and tolerance functions $g$ and $t$ are both defined from $V$ to $\mathbb{R}^+$. 
Nevertheless, it is not restrictive to assume that $g,t: V \rightarrow \mathbb{Q}^+$ in view of the density of rational numbers among real numbers.
So, we can assume that, for each $v \in V$, $t(v)=n_v / d_v$ and $g(v)=n'_v / d'_v$.
Let $m$ be the minimum common multiple of all the numbers $d_v$ and $d'_v$, $v \in V$. 
So we can express $t(v)$ and $g(v)$ as $t(v)=\frac{n_v \cdot m/d_v}{m}$ and 
$g(v)=\frac{n'_v \cdot m/d'_v}{m}$ where $m/d_v$ and $m/d'_v$ are integer values.

Define now the new functions $\tilde{g}$ and $\tilde{t}$ as $\tilde{g}(v)=g(v) \cdot m$ and $\tilde{t}(v)=t(v) \cdot m$, $v \in V$.
Clearly, it holds that $\tilde{g}: V \rightarrow \mathbb{Q}^+$ while $\tilde{t}: V \rightarrow \mathbb{N}^+$.

In order to prove the claim, it remains to prove that $\tilde{g}$ and $\tilde{t}$ define the same graph defined by $t$ and $g$.
This descends from the fact that $\tilde{g}(x)+\tilde{g}(y)=(g(x)+g(y)) \cdot m \geq \mbox{min}(t(x), t(y)) \cdot m= \mbox{min}(\tilde{t}(x), \tilde{t}(y))$ if and only if $g(x)+g(y) \geq \mbox{min}(t(x), t(y))$. \qed
\end{proof}

\begin{theorem}
\label{th.TTinmLPG}
Threshold tolerance graphs are in PCG.
\end{theorem}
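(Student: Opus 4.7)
The plan is to construct, for an arbitrary threshold tolerance graph $G=(V,E,g,t)$, a caterpillar witness tree (just as in the case of interval graphs). By Lemma~\ref{lemma.int} I may assume $g,t:V\to\mathbb{N}^+$, and I relabel the vertices so that $t(v_1)\le t(v_2)\le\cdots\le t(v_n)$. The caterpillar $T$ will have a spine of internal nodes $u_1,\ldots,u_n$, and leaf $v_i$ will be attached to $u_i$ by a pendant edge of weight $\ell_i$, while the spine edge $u_iu_{i+1}$ will have weight $w_i$.

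The key idea is that after sorting, the adjacency condition for $i<j$ becomes the asymmetric inequality $g(v_i)+g(v_j)\ge t(v_i)$, and I want the tree distance $d_T(v_i,v_j)=\ell_i+\ell_j+\sum_{k=i}^{j-1}w_k$ to reflect it. I would set $w_k=(t(v_{k+1})-t(v_k))/2$, which is non-negative by the sorting and yields a telescoping spine contribution of $(t(v_j)-t(v_i))/2$, and then choose $\ell_i=B+g(v_i)-(t(v_1)+t(v_i))/2$, where $B$ is a constant large enough that $\ell_i\ge 0$ for every $i$. A short calculation then gives, for $i<j$,
\[
d_T(v_i,v_j)=2B-t(v_1)+g(v_i)+g(v_j)-t(v_i).
\]
Setting $d_{min}=2B-t(v_1)$ and $d_{max}$ equal to any finite upper bound on the finitely many distances $d_T(v_i,v_j)$, I obtain $d_{min}\le d_T(v_i,v_j)\le d_{max}$ iff $g(v_i)+g(v_j)\ge t(v_i)=\min\{t(v_i),t(v_j)\}$, i.e.\ iff $v_iv_j\in E$. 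Hence $G$ is a PCG.

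The conceptual difficulty is to encode the asymmetric minimum $\min\{t(v_i),t(v_j)\}$ via a symmetric tree distance. Sorting by tolerance forces the minimum to always be $t(v_i)$ with $i<j$; the pendant contribution $-(t(v_1)+t(v_i))/2$ at the smaller-index endpoint then combines with the spine telescope $(t(v_j)-t(v_i))/2$ so that $t(v_j)$ cancels and only $-t(v_i)$ survives in the distance formula. The main step to check carefully is that a single formula for $\ell_i$ works whether $v_i$ plays the role of the smaller or the larger index in a pair; this forces the specific form chosen above. I do not expect a further obstacle, since the non-negativity of $w_k$ is immediate from the sorting and $B$ can be chosen large enough to make every $\ell_i$ non-negative (invoking Lemma~\ref{lemma.int} only to keep everything in a nice discrete regime, should any degeneracy need to be avoided).
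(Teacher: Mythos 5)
Your construction is correct and takes essentially the same approach as the paper: the paper also builds a caterpillar ordered by tolerance (one spine node $x_i$ per tolerance value $i \le K=\max_v t(v)$, spine edges of weight $0.5$, pendant weight $g(v)+\frac{K-t(v)}{2}$), yielding $d_T(l_u,l_v)=g(u)+g(v)+K-\min(t(u),t(v))$ with the adjacency threshold at $K$. Taking $B=\frac{K+t(v_1)}{2}$, your pendant weights and telescoped spine distances coincide with the paper's (up to merging spine nodes with equal tolerance), so the two proofs differ only in parametrization.
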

\begin{proof}
Let $G=(V,E, g, t)$ be a threshold tolerance graph. 
Let $K=\max_v{t(v)}$.
In view of Lemma \ref{lemma.int}, it is not restrictive to assume that $g: V \rightarrow \mathbb{N}^+$, so we split the nodes of $G$  in groups $S_1, \ldots, S_K$ such that $S_i=\{v \in V(G): t(v)=i\}$. Observe that for some values of $i$ the set $S_i$  can be empty.  

We associate to $G$ a caterpillar $T$ as in Figure \ref{fig:tree}. 
The spine of the caterpillar is formed by $K$ nodes, $x_1, \ldots, x_K$, and each node $x_i$ is connected to the leaves $l_v$ corresponding to nodes $v$ in $S_i$. 
The weights $w$ of the edges of $T$ are defined as follows: 
\begin{ditemize}
\item For each edge of the spine $w(x_i,x_{i+1})=0.5$ for $0 \leq i \leq K-1$.  
\item For each leaf $l_v$ connected to the spine through node $x_i$ we assign a weight $w(v, x_i)=g(v)+\frac{K-t(v)}{2}$. 
\end{ditemize}


\begin{figure}[h]
\begin{center}
\includegraphics[scale=0.5]{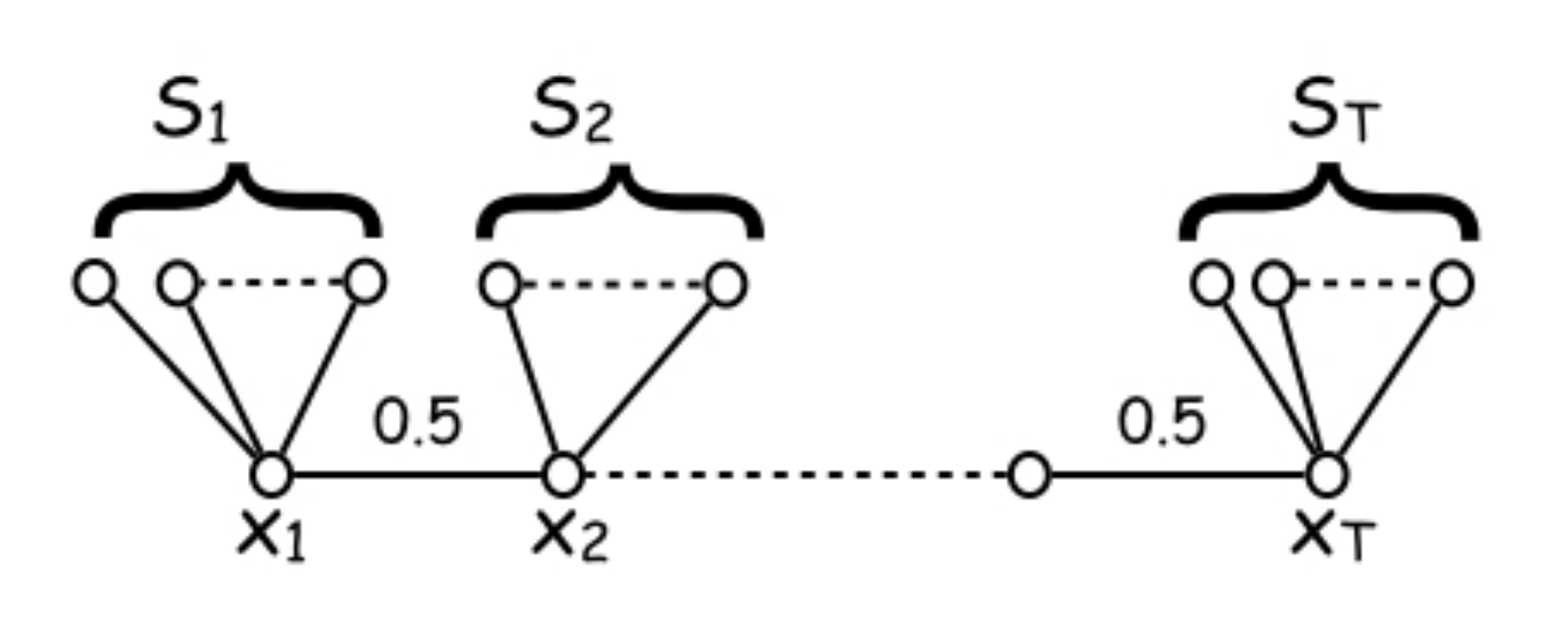}
\caption{\footnotesize{The caterpillar used in the proof of Theorem \ref{th.TTinmLPG} to prove that threshold tolerance graphs are in mLPG.}}
\label{fig:tree}
\end{center}
\end{figure}

We show that $G=PCG(T,w, 2, K)$. 
To this purpose consider two nodes $u$ and $v$ in $G$. 
By construction, in $T$ we have that $l_u$ is connected to $x_{t(u)}$ and $l_v$ to $x_{t(v)}$, where $t(u)$ and $t(v)$ are not necessary distinct. 
Clearly, w.l.o.g we can assume $t(v) \geq t(u)$, i.e. $t(u)= \min{(t(u), t(v))}$. 
We have that 
\begin{flalign}
\notag
d_T(l_u, l_v)&=w(l_u,x_{t(u)})+\frac{t(v)-t(u)}{2} + w(l_v,x_{t(v)}) \\ \notag
&=g(u)+\frac{K-t(u)}{2} + \frac{t(v)-t(u)}{2}+ g(v)+\frac{K-t(v)}{2} \\ \notag
&= g(u) + g(v) + K - t(u)
\end{flalign}

Clearly, $d_T(l_u, l_v) \geq 2$ since $g(v)$ is a positive integer, due to Lemma \ref{lemma.int}; moreover $d_T(l_u, l_v) \geq K$ if and only if  $g(u)+g(v) \geq  t(u)=\min{(t(u), t(v))}$ and this proves the assertion. \qed
\end{proof}

\section{Conclusions}

In this note, we observe that two important classes of graphs, i.e. interval and threshold graphs, have a strong structure and are known to be in PCG.
We consider some superclasses of these classes, obtained by generalizing their definitions (either intersection graphs or graphs defined in terms of weights and thresholds) and we studied them w.r.t. the class of Pairwise Compatibility Graphs.
In this way, we have moved one step ahead toward the full comprehension of the interesting class of PCGs.

\end{doublespace}



\end{document}